\renewcommand*{\p@subsection}{}
\renewcommand*{\p@subsubsection}{}
\numberwithin{equation}{section}
\newtheorem{prop}{Proposition}
\newtheorem{proof}{Proof}
\begin{document}

\begin{frontmatter}

\title{Fractional Floquet theory
\\
{\small\bf   [ Chaos, Solitons \& Fractals 168 (2023) 113196 ]}}

\author{Alexander Iomin}
\ead{iomin@physics.technion.ac.il}
%\cortext[cor]{Corresponding author}
\address{Department of Physics, Technion, Haifa, 32000, Israel, \\
Max Planck Institute for the Physics of Complex Systems, 01187 Dresden, Germany}

\begin{abstract}

A fractional generalization of the Floquet theorem is suggested for 
fractional Schr\"odinger equations (FTSE)s with the time-dependent periodic Hamiltonians.
The obtained result,  called the fractional Floquet theorem (fFT), is formulated in the form of 
the Mittag-Leffler function, which is considered as the eigenfunction of the Caputo fractional derivative.
The suggested formula makes it possible to reduce the FTSE to the standard quantum mechanics with 
the time-dependent Hamiltonian, where the standard Floquet theorem is valid.
Two examples related to quantum resonances are considered 
as well to support the obtained result.

%The fractional Schr\"dinger equations (FTSE)s for systems, which are described by the time-dependent periodic %Hamiltonians are considered. An analog of the well known Floquet theorem  is suggested to solve this class of %the FTSEs,  and this new result  is called the fractal Floquet theorem (fFT).
%The main result can be formulated in the form of the  generalization of the Floquet theorem 
%for the wave function by means of the Mittag-Leffler function. 
%Two examples supporting the obtained result have been considered as well. 

\end{abstract}

\begin{keyword}
Floquet theorem,  Fractional Schr\"odinger equation, Caputo fractional derivative,
Mittag-Leffler function
\end{keyword}

\end{frontmatter}

\section{Introduction}

In this paper, we consider a Floquet theory for fractional time Schr\"odinger equations.
Separately, these two issues of both the Floquet theorem and fractional quantum mechanics 
are well defined and well reviewed.
The Floquet theory is introduced to treat linear differential equations with time-periodic coefficients
\cite{teschl} and it is widely used in description of driven quantum systems such as 
the interaction of radiation with matter, quantum nonlinear resonances, quantum chaos and so on \cite{GrHa98,santoro,CaMo89,satija}. The Floquet theorem states that if the time-dependent Hamiltonian 
$\hat{H}(t)=\hat{H}(t+T)$ is periodic in time with the period $T$, then the wave function, as the solution to the corresponding Schr\"odinger equation, has the form 
$\psi(t)=e^{-i\epsilon t}u(t)$. Here $u (t+T)=u(t)$ is the periodic eigenfunction of a so called 
Floquet operator $\hat{F}$,
\begin{equation}\label{int1}
\hat{F}u(t)\equiv[\hat{H}(t)-i\hbar\partial_t]u(t)=\epsilon u(t),
\end{equation}
where $\hbar$ is an effective dimensionless  Planck constant and $\epsilon$ is the quasienergy spectrum  
\cite{howland1,howland2} (see also Ref. \cite{santoro}). It is worth  mentioning that in many cases, 
Eq. \eqref{int1} has no analytic solutions, like in quantum chaos \cite{CaMo89}. Moreover, the spectrum can have fractal band structures, such as the Hofstadter butterfly \cite{KeKrSpGe2000,HuWeIoKeFiGe2000}, see also an extended review \cite{satija}. This situation is also reflected in the exploration of the Floquet engineering of quantum materials \cite{Hol2016,BuDAPo2015}.

Fractional time Sch\"odinger equations belong to the field of fractional quantum mechanics.
The latter has been introduced in Refs. \cite{las1,las2} and currently it is a well established field of quantum mechanics \cite{las2017,taras2008,iom2019,taras2021}, which is
also supported by experimental evidences \cite{WuHuChChHs2010,Longhi,BaBeWi2008}.
The fractional time Schr\"odinger equation (FTSE) has been introduced by analogy with a fractional diffusion equation \cite{naber},
where the time derivative is replaced by the fractional time derivative, namely $i\hbar\partial_t \rightarrow (i\hbar)^{\alpha}\partial_t^{\alpha}$ with  $\alpha\in (0,2)$, where
 $\partial_t^{\alpha}$ is the Caputo fractional derivative,
\begin{equation}\label{int2}
\partial_t^{\alpha}\equiv \fourIdx{C}{t_0}{\alpha}{t}{D}f(t)= 
\frac{1}{\Gamma(\alpha)}\int_{t_0}^t\frac{d^nf(\tau)}{d\tau^n}(t-\tau)^{\alpha-n}d\tau,
\end{equation}
with $n-1<\alpha<n$. In the present study, we consider $\alpha<1, ~(n=1)$ and $t_0=0$.
An alternative replacement  of the time derivative, 
$i\hbar\partial_t \rightarrow i(\hbar)^{\alpha}\partial_t^{\alpha}$ has been suggested as well in Refs.
\cite{iom2019,AcYaHa2013} with the main argument that such ``fractionalization''  of the time derivative does not lead to any artificial non-physical effects. It is worth noting that the latter approach to the fractional time derivative in the Schr\"odinger equation is based on the generalized Taylor series that has been proposed in Ref. \cite{TrRiBo1999}, and relates  to a general memory in the system \textit{e.g.}, in the form of a fading memory  
\cite{tarasov2018}.
An exact example of the relation between a fractional diffusion equation and its quantum counterpart  is studied in Ref. \cite{iom2020}.

So far, the extended studies of the FTSE  are restricted  by consideration of  conservative systems, 
where the Hamiltonian is time independent.
The main reason for this restriction is that the Caputo fractional derivative destroys the
periodicity of any periodic functions, see \ref{app-A}. Another important obstacle of the treatment of the FTSE with the time-dependent Hamiltonians is the violation of the general Leibniz rule by fractional derivatives, see \textit{e.g.}, Ref.  \cite{SaKiMa93}.

In this paper, the FTSEs with the time-periodic operators, $\hat{H}(t+T)=\hat{H}(t)$ are considered. We also suggest the extension of the Floquet theorem for the FTSE, 
\begin{equation}\label{int3}
\partial_t^{\alpha}\Psi(t) =\hat{H}(t)\Psi(t) .
\end{equation}
We call this theorem by the ``fractional Floquet theorem'' (fFT). 
To apply this theorem for the various physical systems, it is convenient to use dimensionless variables and 
parameters\footnote{Following Refs. \cite{naber,iom2011}, one
introduces the Planck length $L_P=\sqrt{\hbar G/c^3}$, time
$T_P=\sqrt{\hbar G/c^5}$, mass $M_P=\sqrt{\hbar c/G}$, and energy
$E_P=M_Pc^2$, where $\hbar,~G,$ and $c$ are the Planck constant,
the gravitational constant and the speed of light, respectively.
Therefore, quantum mechanics of a particle with 
dimensionless mass $m/M_P\rightarrow m$ is
described by the dimensionless units of the coordinates and time
 $x/L_P\rightarrow x,~ t/T_P\rightarrow t$, while the
dimensionless Planck constant is $\hbar/(E_PT_P)\rightarrow 1$.
We however keep the notation of $\hbar$.
 Note, that the dimensionless
frequency is $\hbar\omega/E_P\rightarrow \omega$, which 
keeps $\omega t$ being the dimensionless parameter as well.}
\cite{naber}.

The structure of the paper is as follows. In Sec. \ref{fFt-P} we formulate the fFT with its proof. Then 
the result on the fFT is presented in the form of subordination to the Floquet theorem, in Sec. \ref{com4}.
In Sec \ref{fFt-D} we consider both the Floquet theorem and the fFT in Fourier - Laplace space.
We also consider two 
examples to show how the fFT can be applied for the corresponding FTSEs. The first example, considered in Sec. \ref{toy}, is devoted to a toy model, where the unperturbed spectrum is not affected by time. 
In the second example we consider a quantum particle in a time-dependent potential  and describe it in the framework of the FTSE in Sec. \ref{fqnlr}. The latter situation is also related to quantum nonlinear resonances
\cite{shur76,BeZa77,BeIoZa81}. In Conclusion, we summarize the obtained result on the fFT.
Appendix plays a dual role. First, it is  
an original result on fractional differentiation of periodic functions. At the same time, it is also 
a  brief overview of fractional calculus.

\section{Fractional Floquet theorem}\label{fFt-P}

We formulate the  fractional Floquet theorem  in the form of the following 
proposition.

\begin{prop}% \textit{Proposition:}

Let us consider the FTSE \eqref{int3}
\[\partial_t^{\alpha}\Psi(t) =\hat{H}(t)\Psi(t) \]
with the time-dependent periodic Hamiltonian $\hat{H}(t+T)=\hat{H}(t)$.
The solution to the FTSE \eqref{int3} can be presented as follows
\begin{equation}\label{fFt-Prop}
\Psi(t)=\sum_nE_{\alpha}\left[i\hbar^{1-\alpha}(\omega n-\epsilon)t^{\alpha}\right]C_n,
\end{equation}
where   $E_{\alpha.1}\equiv E_{\alpha}(z)$ is the one-parameter Mittag -Leffler function  \cite{BaEr53},
$\omega=\frac{2\pi}{T}$, and $C_n$ are the coefficients of the Fourier series of $u(t)$, defined for the standard Floquet theorem in Eq. \eqref{int1}.
\end{prop}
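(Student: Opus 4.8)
The plan is to obtain \eqref{fFt-Prop} by first solving the ordinary Floquet problem and then promoting its temporal building blocks to their fractional analogues. First I would apply the standard Floquet theorem to the periodic Hamiltonian $\hat{H}(t)=\hat{H}(t+T)$, producing the quasienergy $\epsilon$ and the periodic eigenfunction $u(t)=\sum_n C_n e^{in\omega t}$ of the Floquet operator \eqref{int1}, so that the ordinary Schr\"odinger solution reads $\psi(t)=e^{-i\epsilon t}u(t)=\sum_n C_n e^{i(\omega n-\epsilon)t}$. The central observation is that each Fourier channel is a pure exponential $e^{i(\omega n-\epsilon)t}$, i.e.\ an eigenfunction of $\partial_t$ with a \emph{constant} rate $i(\omega n-\epsilon)$. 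The engine of the proof is then the elementary fact that the one-parameter Mittag-Leffler function is the eigenfunction of the Caputo derivative,
\[
\partial_t^{\alpha}E_{\alpha}\!\left(\mu t^{\alpha}\right)=\mu\,E_{\alpha}\!\left(\mu t^{\alpha}\right),
\]
which plays the role of the fractional exponential and reduces to $e^{\mu t}$ at $\alpha=1$. I would therefore replace each exponential channel $e^{i(\omega n-\epsilon)t}$ by $E_{\alpha}[\mu_n t^{\alpha}]$ and fix the rate $\mu_n$ by requiring consistency with the fractionalization rule $i\hbar\partial_t\to(i\hbar)^{\alpha}\partial_t^{\alpha}$ together with the limit $E_1(z)=e^z$, which pins down $\mu_n=i\hbar^{1-\alpha}(\omega n-\epsilon)$ and recovers $\psi(t)$ as a consistency check.

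To make the replacement rigorous rather than formal, I would invoke subordination, exactly the route announced for Sec.~\ref{com4}. The Mittag-Leffler function admits the representation
\[
E_{\alpha}\!\left(\mu t^{\alpha}\right)=\int_0^{\infty} e^{\mu s}\,\xi_{\alpha}(s,t)\,ds,
\]
where $\xi_{\alpha}(s,t)$ is the one-sided subordination kernel obeying $\partial_t^{\alpha}\xi_{\alpha}=-\partial_s\xi_{\alpha}$ with $\xi_{\alpha}(s,0^+)=\delta(s)$. Writing the candidate solution as the subordination of the ordinary Floquet solution, $\Psi(t)=\int_0^{\infty}\psi(s)\,\xi_{\alpha}(s,t)\,ds$, and applying $\partial_t^{\alpha}$ under the integral, an integration by parts transfers the fractional derivative in the physical time $t$ onto an ordinary derivative in the operational time $s$. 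Since $\psi(s)$ solves the ordinary Floquet problem, $\partial_s\psi(s)$ is controlled by $\hat{H}(s)$, and the channelwise identity above reassembles the integral precisely as the series \eqref{fFt-Prop}, with the boundary term at $s=0$ supplying the initial condition via $E_{\alpha}(0)=1$.

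The hard part is the time dependence of $\hat{H}(t)$ as seen through the nonlocal Caputo operator. A naive termwise substitution of \eqref{fFt-Prop} into \eqref{int3} fails, because $\hat{H}(t)=\sum_k\hat{H}_k e^{ik\omega t}$ multiplies each channel by $e^{ik\omega t}$, and there is no product rule $e^{ik\omega t}E_{\alpha}[\mu_n t^{\alpha}]=E_{\alpha}[\mu_{n+k}t^{\alpha}]$ analogous to the exponential case; equivalently, the Leibniz rule for $\partial_t^{\alpha}$ is violated, and in the subordination integral $\hat{H}(s)$ enters at the operational time rather than at the physical time $t$. The resolution, and the reason the ordinary Floquet theorem must be used as input, is that the Floquet decomposition has already diagonalized the explicit time dependence into the \emph{discrete, constant} quasi-frequency spectrum $\{\omega n-\epsilon\}$: in the quasienergy representation each channel carries a time-independent rate, so subordination applies channel by channel, and the mode mixing generated by $\hat{H}(t)$ is exactly the content of the Floquet eigenvalue equation $\sum_k\hat{H}_k C_{n-k}=(\epsilon-\hbar\omega n)C_n$ satisfied by the $C_n$. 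Carrying this bookkeeping through, and checking convergence of the series, then completes the proof.
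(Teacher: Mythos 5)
Your first two ingredients coincide with the paper's: the Floquet data $(\epsilon,C_n)$ taken as input, and the eigenfunction property $\partial_t^{\alpha}E_{\alpha}(\mu t^{\alpha})=\mu E_{\alpha}(\mu t^{\alpha})$ of the Caputo derivative (Comment 3 of Sec.~\ref{com}), which produces the ansatz \eqref{fFt-Prop}. The subordination representation that you use as the engine of the proof also appears in the paper, but only as an \emph{a posteriori} reformulation (Sec.~\ref{com4}, Eq.~\eqref{com4-3}), not as the proof --- and for good reason: carried out honestly, it does not close. Differentiating $\Psi(t)=\int_0^{\infty}\psi(s)\mathcal{K}(s,t)\,ds$ under the integral and integrating by parts in $s$ yields $\int_0^{\infty}\hat{H}(s)\psi(s)\mathcal{K}(s,t)\,ds$ (plus boundary terms), with the Hamiltonian evaluated at the \emph{operational} time $s$; since $\hat{H}(s)$ cannot be pulled out of the integral as $\hat{H}(t)$, this shows that $\Psi$ solves a subordinated equation, not the FTSE \eqref{int3} as written. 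You identify this obstruction yourself --- it is exactly your ``hard part.''

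The gap is that your claimed resolution restates this obstruction instead of removing it. To invoke the Floquet identity $\sum_k\hat{h}(k)C_{n-k}=\hbar(\epsilon-\omega n)C_n$ inside $\hat{H}(t)\Psi(t)=\sum_{n,k}\hat{h}(k)\,C_{n-k}\,e^{ik\omega t}E_{\alpha}\bigl[\mu_{n-k}t^{\alpha}\bigr]$ (with your $\mu_n=i\hbar^{1-\alpha}(\omega n-\epsilon)$), you would need precisely the product rule $e^{ik\omega t}E_{\alpha}[\mu_{n-k}t^{\alpha}]=E_{\alpha}[\mu_n t^{\alpha}]$ that you correctly declared false for $\alpha\neq 1$; so ``carrying the bookkeeping through'' is circular. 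The paper closes this step with a device absent from your proposal: the lattice-amplitude representation $C_n=\sum_q U_q e^{iqn}$ of Eq.~\eqref{fFt-6}, together with $\sum_m\hat{h}(m)e^{-iqm}=\hat{H}(-q/\omega)$ and Eq.~\eqref{fFt-8}, by which the action of $\hat{H}(t)$ is traded for the index-space operator $\hat{H}\left(\frac{i}{\omega}\frac{d}{dn}\right)$ acting on $C_n$ alone (Comment 2 of Sec.~\ref{com}). For that operator the Mittag-Leffler factors are scalars, so the substitution \eqref{pr-2} reduces termwise to the eigenvalue equation \eqref{fFt-10}, i.e.\ to \eqref{pr-1}, and the proof ends. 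Note that the paper itself labels this replacement ``formal'' --- it decrees that the periodic time dependence of $\hat{H}$ acts on the lattice index $n$ rather than multiplying the Mittag-Leffler time factors --- so the proposition is established under that prescription; but it is exactly the step your argument needs and does not supply. A smaller technical point: your kernel identity $\partial_t^{\alpha}\xi_{\alpha}=-\partial_s\xi_{\alpha}$ holds only for $s>0$, with a source term proportional to $\delta(s)\,t^{-\alpha}/\Gamma(1-\alpha)$ that must be tracked to recover the initial condition $E_{\alpha}(0)=1$.
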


% \subsection{Proof}\label{proof}
\begin{proof} %\textit{Proof:}

Important part of the consideration is the eigenvalue problem for the Floquet operator in Eq. \eqref{int1},
which after the Fourier expansion, can be presented as follows
\begin{equation}\label{pr-1}
\sum_n\left[\hat{H}\left(\frac{i}{\omega}\frac{d}{d n}\right)C_n+
\hbar(\omega n-\epsilon)C_n\right]e^{i\omega n t}=0,
\end{equation}
see Eq. \eqref{fFt-10} and Comments \ref{com}.
Due to the completeness of the Fourier basis $e^{i\omega n t}$, 
one obtains the eigenvalue equation 
\[
 \hat{H}\left(\frac{i}{\omega}\frac{d}{d n}\right)C_n=
\hbar(\epsilon-\omega n)C_n,~\forall n . 
\]
Then we take into account that the Mittag-Leffler function is the eigenfunction of the 
Caputo fractional derivative, 
\[
\partial_t^{\alpha}E_{\alpha}(iat^{\alpha})=iaE_{\alpha}(iat^{\alpha}) ,
\]
where $a=\hbar^{1-\alpha} (\omega n -\epsilon)$, see Comment \ref{com3}-3. 
Thus, the straightforward substitution of the fFT \eqref{fFt-17} in the FTSE \eqref{int3}
yields
\begin{equation}\label{pr-2}
\sum_n\left[\hbar(-\omega n-\epsilon)C_n+\hat{H}(t)C_n\right]
E_{\alpha}\left[i\hbar^{1-\alpha}(\omega n -\epsilon)t^{\alpha}\right] = 0,
\end{equation}
see Comment \ref{com}-2. 
The expression in square brackets is just the expression in 
 Eq. \eqref{pr-1} that reduces the FTSE to the standard Floquet theory in Eq.  \eqref{int1}.

This proves the Proposition 1 for the fFT.

\end{proof}

\section{Subordination to the Floquet theorem}\label{com4}

Equation \eqref{fFt-Prop} can be presented in a subordination form, 
where the Floquet theorem subordinates the fFT. 
Taking into account the Laplace image of the Mittag-Leffler function, we have the chain of transformations as follows \cite{iom2019}
\begin{multline}\label{com4-1}
E_{\alpha}(iat^{\alpha})=\frac{1}{2\pi i}\int_{-i\infty}^{i\infty}\frac{s^{\alpha-1}e^{st}ds}{s^{\alpha}-ia} 
=\mathcal{L}^{-1}\left[\frac{s^{\alpha-1}e^{st}ds}{s^{\alpha}-ia} \right] \\
=\mathcal{L}^{-1}\left[\int_{-\infty}^{\infty}\frac{s^{\alpha-1}e^{st}ds}{s^{\alpha}-iz}\delta(z-a)dz\right]
=\frac{1}{2\pi}\int_{-\infty}^{\infty}e^{-ia\xi}
\left[\int_{-\infty}^{\infty} e^{iz\xi}E_{\alpha}(izt^{\alpha})\right]d\xi  \\
=\int_{-\infty}^{\infty}\mathcal{K}(\xi,t)\exp\left[i\xi(\omega n -\epsilon)\right]d\xi ,
\end{multline}
where the Fourier transformation of the Mittag-Leffler function, 
\begin{equation}\label{com4-12}
\mathcal{K}(\xi,t)=
\frac{1}{2\pi}\int_{-\infty}^{\infty}e^{-iz\xi}E_{\alpha}(-izt^{\alpha})dz 
\end{equation}
can be reduced to the superposition of the odd and even functions of $z$ and $\xi$. 
Thus, we obtain
\begin{multline}\label{com4-2}
E_{\alpha}(-izt^{\alpha})=\sum_{n=0}^{\infty}\frac{(-z^2t^{2\alpha})^n}{\Gamma(n2\alpha+1)}
-izt^{\alpha}\sum_{n=0}^{\infty}\frac{(-z^2t^{2\alpha})^n}{\Gamma(n2\alpha+\alpha+1)} \\
=E_{2\alpha}(-z^2t^{2\alpha})-izt^{\alpha}E_{2\alpha, 1+\alpha}(-z^2t^{2\alpha}) .
\end{multline}
The first term here is the even function of $z$, which stands for
the cosine-Fourier transformation and correspondingly leads to the
even function of $\xi$. The second term is the odd function of $z$ and
stands for the sine-Fourier transform that leads to the odd function
of $\xi$. Therefore, integration with respect to $\xi\in  (-\infty ,\infty)$ is reduced to the integration 
with respect to $\xi\in(0,\infty)$. Correspondingly, $\xi$ is
treated as a time parameter. Eventually, taking an analogy with a continuous
time random walk (CTRW) theory \cite{SoKl2005}, we suggest that $\mathcal{K}(\xi,t)$
is the subordinator \cite{bochner,nelson}, which subordinates the fractional evolution on the time scale $t$
 (physical time) to the unitary evolution on the time scale $\xi$ (operational time). 
In other words, the fFT is subordinated to the Floquet theorem:
\begin{equation} \label{com4-3}
\Psi(t)=\sum_n\int_0^{\infty}\mathcal{K}(\xi,t)e^{-i\epsilon\xi}e^{i\omega n\xi}C_nd\xi=
\int_0^{\infty}\mathcal{K}(\xi,t)e^{-i\epsilon\xi}u(\xi).
\end{equation}

\section{Discussion: Floquet theorem in Fourier-Laplace space}\label{fFt-D}

We start the discussion with the case of $\alpha=1$,  when
$E_{\alpha}\left[i\hbar^{1-\alpha}(\omega n-\epsilon)t^{\alpha}\right]=e^{i(\omega n-\epsilon)t}$.
Then Eq. \eqref{fFt-Prop} reduces to
the standard Floquet theorem,
which states that the solution to the Schr\"odinger equation 
\begin{equation}\label{fFt-1}
i\hbar\partial_t\psi(t)=\hat{H}\psi(t), \quad \quad \hat{H}(t+T)=\hat{H}(t)
\end{equation}
is 
\begin{equation}\label{fFt-2}
\psi(t)=\sum_ne^{i(\omega n-\epsilon)t}C_n =e^{-i\epsilon t}u(t), 
\end{equation}
where $u(t)$ is the periodic function, which together with the Hamiltonian  can be presented in 
the form of the the Fourier expansion
\begin{subequations}\label{fFt-3}
\begin{align}
& u(t)= \sum_{n=-\infty}^{\infty}C_ne^{in\omega t}, \quad \quad \omega=\frac{2\pi}{T},  \label{fFt-3a} \\
& \hat{H}(t)=\sum_{m=-\infty}^{\infty}\hat{h}(m)e^{im\omega t}  .      \label{fFt-3b}
\end{align}
\end{subequations}

The Laplace transform of Eqs. \eqref{fFt-2} and \eqref{fFt-3a} yields
\begin{equation}\label{fFt-4}
\mathcal{L}\left[\psi(t)\right]=\tilde{\psi}(s)=\sum_nC_n\left(s+i\epsilon-i\omega n\right)^{-1},
\end{equation}
and the Laplace transform of the Schr\"odinger Eq. \eqref{fFt-1} yields
\begin{equation}\label{fFt-5}
i\hbar s \sum_n\frac{C_n}{s+i\epsilon-i\omega n}=
\sum_n\left(\sum_m\frac{\hat{h}(m)C_{n-m}}{s+i\epsilon -i\omega n} +i\hbar C_n\right),
\end{equation}
where we take into account that the initial condition is $\Psi_0=\psi_0=\psi(t=0)=\sum_nC_n$, according to 
the Floquet theorem \eqref{fFt-2} and Eq. \eqref{fFt-3a}.

Considering $C_n$ as a ``lattice amplitude'', we have 
\begin{equation}\label{fFt-6}
C_n=\sum_qU_qe^{iqn},
\end{equation}
and substituting the series of Eq. \eqref{fFt-6} in Eq. \eqref{fFt-5}, we obtain,
\begin{equation}\label{fFt-7}
\sum_q\sum_n\left[\frac{1}{s+i\epsilon-i\omega n}\left(i\hbar s-\sum_m\hat{h}(m)e^{-iqm}\right)
-i\hbar\right]U_qe^{iqn}=0.
\end{equation}
According to Eq. \eqref{fFt-3b}, summation with respect to $m$ yields 
$\sum_m\hat{h}(m)e^{-iqm}=\hat{H}(-q/\omega)$. Then taking into account
Eq. \eqref{fFt-6} and the fact that 
\begin{equation}\label{fFt-8}
\hat{H}\left(\frac{i}{\omega}\frac{d}{d n}\right)e^{iqn}=
\hat{H}\left(-\frac{q}{\omega}\right)e^{iqn},
\end{equation}
we obtain Eq. \eqref{fFt-7} as follows
\begin{equation}\label{fFt-9}
\sum_n\frac{1}{s+i\epsilon-i\omega n}
\left[\hat{H}\left(\frac{i}{\omega}\frac{d}{d n}\right)C_n
-(\hbar\epsilon-\hbar\omega n)C_n\right] =0.
\end{equation}
Therefore for any values of $\epsilon$ and $n$, the coefficients $C_n$ are eigenfunctions of the Hermitian
operator $\hat{H}\left(\frac{i}{\omega}\frac{d}{d n}\right)$, namely
\begin{equation}\label{fFt-10}
\hat{H}\left(\frac{i}{\omega}\frac{d}{d n}\right)C_n=(\hbar\epsilon-\hbar\omega n)C_n .
\end{equation}
Note that the Laplace argument $s$ is not restricted as well, since 
performing the Laplace inversion, the denominator in Eq. \eqref{fFt-9} becomes just the exponential
function, $e^{-it(\epsilon-\omega n)}$.

\subsection{Floquet theorem for the FTSE}

Let us return to the FTSE  \eqref{int3}, and consider its solution in the form
\begin{equation}\label{fFt-11}
\Psi(t)=\sum_n\Lambda_n(t)C_ne^{i\omega nt},
\end{equation}
where $\Lambda_n(t)$ is an unknown function of time.
The Laplace transformations of both the FTSE  \eqref{int3} and the solution \eqref{fFt-11} are
\begin{subequations}\label{fFt-12}
\begin{align}
i\hbar^{\alpha}s^{\alpha}\tilde{\Psi}(s)=\sum_m\hat{h}(m)\sum_nC_n
\mathcal{L}\left[\Lambda_n(t)e^{it\omega(n+m)}\right] +i\hbar^{\alpha}s^{\alpha-1}\Psi_0 
\nonumber \\
=\sum_m\sum_n\hat{h}(m)C_n\tilde{\Lambda}(s-i\omega n-i\omega m)
+i\hbar^{\alpha}s^{\alpha-1}\Psi_0 , \label{fFt-12a} \\
\tilde{\Psi}(s)=\sum_n\tilde{\Lambda}(s-i\omega n) C_n, \quad \quad 
\Psi_0=\Lambda(0)\sum_nC_n .  \label{fFt-12b} 
\end{align}
\end{subequations}
Performing the shift of indexes $\sum_n\tilde{\Lambda}(s-i\omega n-i\omega m)C_n=
\sum_n\tilde{\Lambda}(s-i\omega n)C_{n-m}$, we obtain the equation for $\tilde{\Lambda}(s-i\omega n)$
 from Eqs. \eqref{fFt-12a} and \eqref{fFt-12b} as follows
\begin{equation}\label{fFt-13}
\sum_n\left[\tilde{\Lambda}(s-i\omega n)\left( i\hbar^{\alpha}s^{\alpha}C_n-
\sum_m\hat{h}(m)C_{n-m}\right)-i\hbar^{\alpha}s^{\alpha-1}\Lambda(0)C_n\right]=0.
\end{equation}

One should bear in mind that the coefficients $C_n$ and the Hamiltonian $\hat{H}(t)$
are the same as in the standard Floquet theory. Therefore, performing the expansion 
\eqref{fFt-6} for the lattice amplitudes $C_n$ and then performing summation
with respect to $m$, we obtain $\sum_m\hat{h}(m)e^{-iqm}=\hat{H}(-q/\omega)$.
Then taking into account 
Eq. \eqref{fFt-8} and the eigenvalue Eq. \eqref{fFt-10}, we obtain  Eq.
\eqref{fFt-13} as follows
\begin{multline}\label{fFt-14}
\sum_n\sum_q\left\{\tilde{\Lambda}(s-i\omega n)
\left[i\hbar^{\alpha}s^{\alpha}-\hat{H}(-q/\omega)\right]
-i\hbar^{\alpha}s^{\alpha-1}\Lambda(0)\right\}U_qe^{iqn} \\
=\sum_n\left\{\tilde{\Lambda}(s-i\omega n)
\left[i\hbar^{\alpha}s^{\alpha}-\hat{H}\left(\frac{i}{\omega}\frac{d}{dn}\right)\right]
-i\hbar^{\alpha}s^{\alpha-1}\Lambda(0)\right\}C_n \\
=\sum_n\left\{\tilde{\Lambda}(s-i\omega n)
\left[i\hbar^{\alpha}s^{\alpha}-\hbar(\epsilon-\omega n)\right]
-i\hbar^{\alpha}s^{\alpha-1}\Lambda(0)\right\}C_n =0.
\end{multline}
Since the coefficients $C_n$ are the functions in question in the Floquet theory, it is reasonable to 
suppose that the Laplace image is 
\begin{equation}\label{fFt-15}
\tilde{\Lambda}(s-i\omega n)=
\frac{s^{\alpha-1}\Lambda(0)}{s^{\alpha}+i\hbar^{1-\alpha}(\epsilon-\omega n)} ,
\end{equation}
which is the Laplace image of the Mittag-Leffler function
\cite{BaEr53}. Performing the Laplace inversion and setting $\Lambda(0)=1$, we obtain
the solution in the form of the one parameter Mittag-Leffler function,
\begin{equation}\label{fFt-16}
\Lambda_n(t)e^{i\omega nt}=E_{\alpha}\left[i\hbar^{1-\alpha}(\omega n-\epsilon)t^{\alpha}\right].
\end{equation}
Taking into account the definition \eqref{fFt-11}, we arrived at the wave function as follows
\begin{equation}\label{fFt-17}
\Psi(t)=\sum_n\Lambda_n(t)e^{i\omega nt}C_n=
\sum_nE_{\alpha}\left[i\hbar^{1-\alpha}(\omega n-\epsilon)t^{\alpha}\right]C_n.
\end{equation}
We stress that it has been proven in Sec. \ref{fFt-P} that the wave function in the form of \eqref{fFt-17} is the fFT. 

 \subsection{Comments}\label{com}
 
\begin{itemize}
%\begin{comt}\label{com1} \textbf\end{comt}
\item[1]{Evolution operator}\label{com1}

The evolution of the wave function of the FTSE \eqref{int3} is according to the evolution 
operator $\hat{U}(t)$ (do not confuse it with the band amplitude $U_q$), such that 
$\Psi(t)=\hat{U}(t)\Psi_0$. However, the explicit operator form of the evolution operator is unknown,
since the fractional Caputo derivative destroys the periodicity of the wave function, see \ref{app-A}.
Therefore, the Fourier expansion in Eqs. \eqref{fFt-3a} and \eqref{fFt-11} is formal, such that 
the coefficient $C_n$ can be both an operator valued function and a function of coordinate space,
depending on the explicit operator form of the Hamiltonian $\hat{H}(t)$. The former case is considered 
as an example in Sec. \ref{toy}, while the latter case is an example of Sec. \ref{fqnlr}.

%\begin{comt}\label{com2}\textbf\end{comt}
\item[2] {Lattice amplitude}\label{com2}

Important part of the analysis is relates to the consideration of the amplitude $C_n$ as a ``lattice amplitude''  
in the form of the expansion \eqref{fFt-6}, namely
$C_n=\sum_qU_qe^{iqn}$. Then taking into account the expansion  \eqref{fFt-3b} and performing summation
with respect to $m$, one obtains that the time dependence in the Hamiltonian can be replaced as follows
 $\sum_m\hat{h}(m)e^{-iqm}=\hat{H}(-q/\omega)$.  The latter leads to expression \eqref{fFt-8}, which is
\[
\hat{H}\left(\frac{i}{\omega}\frac{d}{d n}\right)e^{iqn}=
\hat{H}\left(-\frac{q}{\omega}\right)e^{iqn}.
\]
Therefore, taking into account this chain of transformations (that always can be done),
one can made the formal replacement 
\[
\hat{H}(t)C_n\equiv \hat{H}(\omega t)C_n\rightarrow \hat{H}\left(i\frac{d}{d n}\right)C_n.
\]

%\begin{comt}\label{com3}\textbf{Mittag-Leffler function}\end{comt}
\item[3] {Mittag-Leffler function}\label{com3}

Note, that the Mittag-Leffler function is the eigenfunction of the Caputo fractional derivative. 
Indeed, taking into account Eq. \eqref{ch2_1-RL-7}, we obtain
\begin{multline*}
\partial_t^{\alpha}E_{\alpha}(iat^{\alpha})=\sum_{k=0}^{\infty}\frac{(ia)^k}{\Gamma(k\alpha+1)}
\partial_t^{\alpha}t^{k\alpha} \\
=\sum_{k=1}^{\infty}\frac{(ia)^k}{\Gamma(k\alpha+1)}\cdot
\frac{\Gamma(k\alpha+1)}{\Gamma(k\alpha-\alpha+1)}t^{(k-1)\alpha}  \\
=ia\sum_{k=0}^{\infty}\frac{(ia)^kt^{k\alpha}}{\Gamma(k\alpha+1)}=iaE_{\alpha}(iat^{\alpha}) ,
\end{multline*}
where $a=\hbar^{1-\alpha} (\omega n -\epsilon)$.

\end{itemize}

\section{Example 1: Toy model}\label{toy}

In this section, we consider a toy Hamiltonian presented in the form
\begin{equation}\label{toy-1}
\hat{H}(t)=\hat{H}_0\cos(\omega t),
\end{equation}
where $\hat{H}_0$ can be an arbitrary time-independent Hamiltonian, which can
describe a variety of conservative systems. The standard Schr\"odinger equation for the toy model \eqref{toy-1}
can be easily solved, and the evolution of the initial wave function $\psi_0$ is
\begin{equation}\label{toy-2}
\psi(t)=\exp\left(-\frac{i}{\hbar}\int_0^t\hat{H}(t')dt'\right)\psi_0 =
\exp\left[-i\frac{\hat{H}_0}{\hbar\omega}\sin(\omega t)\right]\psi_0 .
\end{equation}
The initial wave function $\psi_0$ can be \textit{e.g.}, the eigenfunction of $\hat{H_0}$, namely $\hat{H}_0\psi_0=e\psi_0$.
Then $\psi(t)=\exp[-i(e/\hbar\omega)\sin(\omega t)]$. Since for $t=kT$ ($k$ is integer), $\psi(kT)=\psi_0$,
correspondingly the quasienergy is zero, $\epsilon=0$ and $\psi(t)=u(t)$.
The Fourier transformation yields
\begin{equation}\label{toy-3}
\psi(t)=\sum_{n=-\infty}^{\infty}J_{-n}\left(\frac{\hat{H}_0}{\hbar\omega}\right)e^{i\omega nt},
\end{equation}
where $C_n=(-1)^nJ_n(z)\equiv J_{-n}\left(\frac{\hat{H}_0}{\hbar\omega}\right) $ is the Bessel function of the first kind, which also satisfies the eigenvalue Eq. \eqref{fFt-10}. The latter reads 
\begin{equation}\label{toy-4}
z\cos\left(\frac{d}{dn}\right)J_{-n}(z)= -n J_{-n}(z) ,
\end{equation}
which is  just the recurrence relation of the Bessel functions \cite{AbSt72}: 
\[
zJ_{n+1}(z)+zJ_{n-1}(z)=2nJ_n(z) .
\]

\subsection{Fractional toy model}\label{ftoy}
The fractional generalization of the toy model is formulated in the framework of the FTSE
as follows
\begin{equation}\label{toy-5}
i\hbar^{\alpha}\partial_t^{\alpha}\Psi(t)=\hat{H}_0\cos(\omega t) \Psi(t)
\end{equation}
where the initial condition is $\Psi_0=\psi_0$.
According to the fFT, we look for the solution to the FTSE \eqref{toy-5} as follows
\begin{equation}\label{toy-6}
\Psi(t)=\sum_nE_{\alpha}\left(i\hbar^{1-\alpha}\omega nt^{\alpha}\right)
(-1)^nJ_n\left(\frac{\hat{H}_0}{\hbar\omega}\right)\psi_0 ,
\end{equation}
which for $\alpha=1$ reduces to Eq. \eqref{toy-2}.

Let us substitute this fFT solution into the FTSE \eqref{toy-4}. Taking into account the 
Comment \ref{com3}-3,
we obtain 
\begin{multline}\label{toy-7}
\sum_n(-\hbar\omega n )E_{\alpha}\left(i\hbar^{1-\alpha}\omega nt^{\alpha}\right)
(-1)^nJ_n\left(\frac{\hat{H}_0}{\hbar\omega}\right)\psi_0 \\
=
\hat{H}_0\cos(\omega t) \sum_nE_{\alpha}\left(i\hbar^{1-\alpha}\omega nt^{\alpha}\right)
(-1)^nJ_n\left(\frac{\hat{H}_0}{\hbar\omega}\right)\psi_0 .
\end{multline}
Performing the chain of transformations according to the Comment 
\ref{com2}-2, we obtain the 
rhs of Eq. \eqref{toy-7} as follows
\[
\hat{H}_0 \sum_nE_{\alpha}\left(i\hbar^{1-\alpha}\omega nt^{\alpha}\right)
\cdot\left[\cos\left(\frac{d}{dn}\right)(-1)^nJ_n\left(\frac{\hat{H}_0}{\hbar\omega}\right)\right]\psi_0 .
\]
Then from Eq. \eqref{toy-4}, we get the identity for Eq. \eqref{toy-7}.

\section{Example 2: Fractional quantum dynamics of a particle in time-dependent potential}\label{fqnlr}

In this section, we consider another, more realistic example, where the coefficients $C_n=C_n(x)$ 
are defined in the coordinate space $x\in R$. It is a quantum particle in a nonlinear time-dependent potential
$V(x)\cos(\omega t)$, 
which can be described \textit{e.g.}, by the Schr\"odinger equation
\begin{equation}\label{fqnlr-1}
i\hbar\partial_t\psi(x,t)=\hat{H}(t)\psi(x,t)\equiv
\left[-\frac{\hbar^2}{2m}\frac{d^2}{dx^2}+V(x)\cos(\omega t)\right]\psi(x,t) ,
\end{equation}
where $m$ is the mass of a quantum particle affected by the time-dependent potential
$V(x)\cos(\omega t)$, where $V(x)$ is a nonlinear function of $x$. 
Such systems can describe a quantum nonlinear resonance  \cite{shur76,BeZa77}. 
We concentrate our attention on the Floquet theory of the Schr\"odinger Eq. \eqref{fqnlr-1} and shall not discuss the physics of the quantum resonance, which is well studied and  well reviewed in literature \cite{Zas85,BerKa2001,Reichl2004}.
Therefore, according to the Floquet theorem,  the eigenvalue problem for the Floquet operator reads
\begin{equation}\label{fqnlr-2}
\hat{H}(t)u(x,t)-i\hbar\partial_tu(x,t)=\hbar\epsilon u(x,t),
\end{equation}
where $u(x,t)=\sum_nC_n(x)e^{i\omega nt}$. This yields the equation for the coefficients $C_n(x)$ as follows
\begin{multline}\label{fqnlr-3}
\hat{K}C_n(x)\equiv \\
-\frac{\hbar^2}{2m}\frac{d^2}{dx^2}C_n(x)+\frac{1}{2}V(x)\left[C_{n+1}(x)+C_{n-1}(x)\right]+
\hbar(\omega n-\epsilon)C_n(x)=0.
\end{multline}
After diagonalization, when $C_n(x)=\sum_qU_q(x)e^{iqn}$, Eq. \eqref{fqnlr-3} reads
\begin{equation}\label{fqnlr-4}
-\frac{\hbar^2}{2m}\frac{d^2}{dx^2}U_q(x)+V(x)\cos(qn)U_q(x)+\hbar\omega nU_q(x)=\hbar\epsilon U_q(x).
\end{equation}
Depending on the explicit form of the potential $V(x)$, this equation can be treated either
analytically, or numerically \cite{Zas85,BerKa2001,Reichl2004}. We however, do not concern with the issue,
as admitted above.

Equation \eqref{fqnlr-3} is the skeleton equation  for the consideration of  the  fractional quantum nonlinear resonance in the framework of the FTSE \eqref{fqnlr-1}, which reads now
\begin{equation}\label{fqnlr-5}
i\hbar\partial_t^{\alpha}\Psi(x,t)=\hat{H}(t)\Psi(x,t)\equiv
\left[-\frac{\hbar^2}{2m}\frac{d^2}{dx^2}+V(x)\cos(\omega t)\right]\Psi(x,t) ,
\end{equation}

To obtain Eq.  \eqref{fqnlr-3}, let us substitute  expression \eqref{fFt-Prop} into FTSE \eqref{fqnlr-5}. 
Following the Comment \ref{com3}-2, we obtain
\begin{multline}\label{fqnlr-6}
\sum_n\hbar(\epsilon -\omega n)E_{\alpha}\left[i\hbar^{1-\alpha}(\epsilon -\omega n)t^{\alpha}\right]C_n(x) \\
= \sum_n E_{\alpha}\left[i\hbar^{1-\alpha}(\epsilon -\omega n)t^{\alpha}\right]
\left[-\frac{\hbar^2}{2m}\frac{d^2C_n(x)}{dx^2}+V(x)\cos(\omega t)C_n \right] .
\end{multline}
Following the Comment \ref{com2}, we have 
$\cos(\omega t)C_n \rightarrow \frac{1}{2}(C_{n+1}+C_{n-1}) $. Therefore, in the Laplace space,
the real and imaginary parts of  Eq. \eqref{fqnlr-6} are
\begin{subequations}\label{fqnlr-7}
\begin{align}
\sum_n\frac{s^{2\alpha-1}}
{s^{2\alpha}+\hbar^{2-2\alpha}(\epsilon -\omega n)^2}\left[\hat{K}C_n(x)\right]=0, \\
i\hbar^{1-\alpha}\sum_n\frac{(\epsilon -\omega n)}
{s^{2\alpha}+\hbar^{2-2\alpha}(\epsilon -\omega n)^2}\left[\hat{K}C_n(x)\right]=0.
\end{align}
\end{subequations}
Then we arrive at Eq. \eqref{fqnlr-3}: $\hat{K}C_n(x)=0$.

\section{Conclusion}\label{concl}
In the paper we concern with the fractional Schr\"odinger equations (FTSE) for systems, which are described by the time-dependent periodic Hamiltonians. An analog of the well known Floquet theorem  is suggested to solve this class of the FTSEs,  and this new result  is called the fractional Floquet theorem (fFT).
The main result can be formulated in the form of the following generalization of the Floquet theorem for the wave function. Namely, considering the standard  Floquet theorem  as follows 
\begin{equation}\label{concl-1}
\psi(t)=e^{-i\epsilon t}u(t) =\sum_nC_ne^{-i\epsilon t+i\omega nt} ,
\end{equation}
we generalize the exponential by the one-parameter Mittag-Leffler function 
$E_{\alpha}\left[i\hbar^{1-\alpha}(\omega n -\epsilon)t^{\alpha}\right]$, which yields
the fFT \eqref{fFt-Prop} as follows
\begin{equation}\label{concl-2}
\Psi(t)=\sum_nC_nE_{\alpha}\left[i\hbar^{1-\alpha}(\omega n -\epsilon)t^{\alpha}\right],
\end{equation}
which for $\alpha=1$ reduces to the Floquet theorem \eqref{concl-1}.
The fFT has been verified by the straightforward substitution of Eq. \eqref{fFt-Prop} in the FTSE that 
yields the identity. 
Two examples supporting the obtained result have been considered as well. 

In conclusion, it should be admitted that by means of the Mittag-Leffler function the fFT reduces the FTSE \eqref{int3} with the time dependent Hamiltonian to the standard Floquet theorem consideration. In this case, the  quasienergy spectrum of the FTSE exists and is determined by the standard Floquet theorem.

%In conclusion, it worth be noted that a proof based on mathematical rigor, which accounts the spectral properties of the quasienergy $\epsilon$ is desired as well.

\section*{Acknowledgments} 
It is my pleasure to acknowledge the hospitality at the Max Planck Institute for the Physics of Complex Systems, Dresden, where a part of the work has been done.

\begin{appendix}
%\appendix

\section{Fractional differentiation of periodic functions}
\label{app-A}

In this notes, we consider fractional calculus as an example of fractional differentiation of periodic functions. Recently, this issue has attracted 
some attention in the form of non-existence of periodic solutions in fractional-order dynamical systems \cite{Kaslik2012}. 
It has been shown in the framework of the Mellin transform consideration that fractional differentiation of periodic functions destroys their periodicity.  The result was obtained for the Caputo, Riemann-Liouville and 
Gr\"{u}nwald-Letnikov definitions of fractional-order derivatives.
We suggest an alternative approach, following the fundamental work of Ref. 
\cite{SaKiMa93}, and the Fourier and the Laplace transformations are the main machinery of the analysis.

Any periodic function $g(t)$ of the period $T$, such that $g(t+T)=g(t)$,
can be presented in the form of its Fourier series
\begin{equation}\label{ch2_1-RL-1}
g(t)=\sum_{l=-\infty}^{\infty}g_l e^{i\bar{l}t}\, , \quad
g_l=\frac{1}{T}\int_0^Tg(t)e^{-i\bar{l}t}dt\, ,
\quad \bar{l}=2\pi l/T\, .
\end{equation}
Therefore, $g(t)$ is determined by its Fourier image $g_l$ if the Fourier series \eqref{ch2_1-RL-1} converges. In particular, the $n$-th derivative is a periodic function, determined by $\bar{l}^ng_l$ as the convergence of 
$g^{(n)}(t)=\tfrac{d^ng(t)}{dt^n}$ in Eq. \eqref{ch2_1-RL-1}. 
Therefore, to define periodicity or non-periodicity of the fractional 
integro-differentiation of $g(t)$ one considers the exponential
$\exp[2\pi i lt/T]$ only.

\subsection{Riemann-Liouville fractional integral}
\label{ch2_1-RL}

Note that the Caputo and Riemann-Liouville  
fractional derivatives are regularization of the fractional derivative 
\begin{equation}\label{ch2_1-RL-2}
\fourIdx{}{a+}{\mu}{t}{D}f(t)
=\frac{1}{\Gamma(\mu)}
\int_{a}^{t}(t-\tau)^{-\mu-1}f(\tau)\,d\tau,
\end{equation}
where we take $a=0$.
Therefore, taking $n-1<\mu<n$ and $\nu=n-\mu$, and dropping out the integer part of the derivative from the consideration, 
we arrive at the Riemann-Liouville  fractional integration of the exponential
$f(t)=e^{i\bar{l}\tau}$,
\begin{equation}\label{ch2_1-RL-3}
\fourIdx{}{0}{-\nu}{t}{D}f(t)\equiv
I_{0+}^{\nu}f(t) =
\frac{1}{\Gamma(\nu)}
\int_{0}^{t}(t-\tau)^{\nu-1}e^{i\bar{l}\tau}\,d\tau.
\end{equation}
To treat this integral, we use the Laplace transform of the convolution integral, which we present it in the form of the identity
\begin{equation}\label{ch2_1-RL-4}
I_{0+}^{\nu}e^{i\bar{l}t}=\mathcal{L}^{-1}
\left\{\mathcal{L}\left[I_{0+}^{\nu}e^{i\bar{l}t}\right]\right\} 
=\frac{1}{2\pi i}\int_C\frac{s^{\nu}e^{st}}{s-i\bar{l}}ds=
t^{-\nu}E_{1,1-\nu}\left(i\bar{l}t\right)\, ,
\end{equation}
where the last line is the integral representation of the two parameter
Mittag-Leffler function $E_{\alpha,\beta}(z)$, \cite{BaEr53}.
Taking into account that $1-\nu=\{\mu\}$ is the fractional part of $\mu$,
and presenting the  Mittag-Leffler  function as a superposition of the real and imaginary parts, we  obtain
\begin{multline}\label{ch2_1-RL-5}
E_{1,\{\mu\}}\left(i\bar{l}t\right)=\sum_{k=0}^{\infty}\left[
\frac{\left(i\bar{l}t\right)^{2k}}{\Gamma(2k+\{\mu\})}
+\frac{\left(i\bar{l}t\right)^{2k+1}}{\Gamma(2k+1+\{\mu\})} \right]\\
=E_{2,\{\mu\}}\left(-\bar{l}^2t^2\right)+
i\bar{l}tE_{2,1+\{\mu\}}\left(-\bar{l}^2t^2\right)\, .
\end{multline}
Eventually, we show that the Riemann-Liouville fractional integration destroys the periodicity of periodic functions. Therefore, both the Caputo and Riemann-Liouville fractional derivatives act in the same way.

\subsection{Fox $H$-function} 

Let us present exponential in the form of the Fox $H$-function,
\[
e^{-z}=H_{0,1}^{1,0}\left[z\Big\vert\binom{}{(0,1)}\right].
\]
Thus the exponential can be presented in the form of the Mellin-Barnes
integral \cite{MaSaHa10}
\begin{equation}\label{ch2_1-RL-6}
e^{i\bar{l}t}=\frac{1}{2\pi i}\int_C
\Gamma(\xi)(-i\bar{l}t)^{-\xi}d\xi
\end{equation}
Making the variable change $\xi\rightarrow -\xi$, we obtain 
the singularities of the gamma functions $\Gamma(-\xi)$ at
$\Re{\xi}\ge 0$. (In this case, Eq. \eqref{ch2_1-RL-6}
corresponds to the Meijer $G$-function \cite{BaEr53}.)
Now let us consider the Caputo fractional derivative $\partial^{\{\mu\}}_t$
of the power law function $t^{\xi}$, which yields
\begin{multline}\label{ch2_1-RL-7}
\partial^{\{\mu\}}_t(-i\bar{l}t)^{\xi}= (-i\bar{l})^{\xi}\frac{\Gamma(\xi+1)}{\Gamma(1-\{\mu\})}
\int_0^t\tau^{\xi-1}(t-\tau)^{-\{\mu\}}d\tau \\
=(-i\bar{l})^{\xi} 
\frac{\Gamma(\xi+1)}{\Gamma(\xi-\{\mu\}+1)}t^{\xi-\{\mu\}}\, ,
\end{multline}
where $\{\mu\}=\mu+1-n$.
Combining this result with Eq. \eqref{ch2_1-RL-6}, 
then performing the variable change $\xi\rightarrow -\xi$  with corresponding deformation of the countour,
we obtain the Mellin-Barnes integral for the Mittag-Leffler function
\begin{equation}\label{ch2_1-RL-8}
\partial^{\{\mu\}}_te^{i\bar{l}t}=\frac{t^{-\{\mu\}}}{2\pi i}\int_C
\frac{\Gamma(\xi)\Gamma(1-\xi)}{\Gamma(1-\mu-\xi)}(-i\bar{l}t)^{-\xi}d\xi
=t^{-\{\mu\}}E_{1,1-\{\mu\}}\left(i\bar{l}t\right)
\end{equation}

\subsection{Riesz fractional derivatives}\label{ch2_1-RF}

In this section we treat the periodic function $g(x)=g(x+L)$, 
where $x\in R$ in the framework of the symmetric Riesz-Feller integration.
Let us consider the symmetric Riesz-Feller derivative, which can be defined  in Fourier space 
\begin{equation}\label{ch2_1-RF-1}
\mathcal{F}\left[\fourIdx{\textrm{RF}}{0}{\mu}{x}{D}f(x)\right](k)=
-|k|^{\mu}\mathcal{F}\left[f(x)\right](k).
\end{equation}
Its integral representation reads 
\begin{multline}\label{ch2_1-RF-2}
    \fourIdx{\textrm{RF}}{0}{\mu}{x}{D}f(x)=\frac{\Gamma(1+\mu)}{\pi}\sin\frac{\mu\pi}{2}\int_{0+}^{\infty}\frac{f(x+\xi)-2f(x)+f(x-\xi)}
    {\xi^{1+\mu}}d\xi \\
    = \frac{\mu}{2\Gamma(1-\mu)\cos\frac{\mu\pi}{2}}
    \int_{0}^{\infty}\frac{f(x+\xi)-2f(x)+f(x-\xi)}
    {\xi^{1+\mu}}d\xi
\end{multline}
Setting $f(x)=e^{i\bar{l}x}$ with $\bar{l}=2\pi l/L$, integration in Eq. \eqref{ch2_1-RF-2} yields
\begin{multline}\label{ch2_1-RF-3}
    \fourIdx{\textrm{RF}}{0}{\mu}{x}{D}f(x)
    = \frac{\mu e^{i\bar{l}x}}{2\Gamma(1-\mu)\cos\frac{\mu\pi}{2}}
    \int_{0}^{\infty}\frac{\sin^2(\bar{l}\xi)}{\xi^{1+\mu}}d\xi \\
    = -\frac{2e^{i\bar{l}x}}{\Gamma(1-\mu)\cos\frac{\mu\pi}{2}}
    \int_{0}^{\infty}\sin^2(\bar{l}\xi)d(\xi^{-\mu}) =
    A(\mu)e^{i\bar{l}x}, \quad 0<\mu<2\, ,
\end{multline}
where $A(\mu)=(2\bar{l})^{\mu}$. Therefore, the symmetric
Riesz-Feller differentiation of periodic functions does not destroy the  periodicity, namely 
$\fourIdx{\textrm{RF}}{0}{\mu}{x}{D}g(x)$ is the periodic
function with the same period $L$ of $g(x)$ for $0<\mu<2$.

\subsection{Riesz fractional derivative in Fourier space}

Let us consider left $(+)$ and right $(-)$ fractional integrals
\begin{subequations}\label{ch2_1-RF-4}
\begin{align}
&(I_{+}^{\mu}f)(x)=\frac{1}{\Gamma(\mu)}\int_{-\infty}^{x}
(x-y)^{\mu-1}f(y)dy,  \label{ch2_1-RF-4a} \\
&(I_{-}^{\mu}f)(x)=\frac{1}{\Gamma(\mu)}\int^{\infty}_{x}
(y-x)^{\mu-1}f(y)dy . \label{ch2_1-RF-4b}
\end{align}
\end{subequations}
Correspondingly the R-L fractional derivatives are 
\cite{SaKiMa93}(\S ~5.1)
\begin{subequations}\label{ch2_1-RF-5}
\begin{align}
&(D_{+}^{\mu}f)(x)=\frac{1}{\Gamma(n-\mu)}\frac{d^n}{d x^n}\int_{-\infty}^{x}
(x-y)^{n-\mu-1}f(y)dy,  \label{ch2_1-RF-5a} \\
&(D_{-}^{\mu}f)(x)=\frac{(-1)^n}{\Gamma(n-\mu)}\frac{d^n}{d x^n}
\int^{\infty}_{x} (y-x)^{n-\mu-1}f(y)dy , \label{ch2_1-RF-5b}
\end{align}
\end{subequations}
where $n-1<\mu<n$.
Then the Fourier transform of Eq. \eqref{ch2_1-RF-5} reads
\cite{SaKiMa93}(\S ~7.1)
\begin{equation}\label{ch2_1-RF-6}
\mathcal{F}\left[(D_{\pm}^{\mu}f)(x)\right](k)=
(\mp i k)^{\mu}\tilde{f}(k)\,  .
\end{equation}

Introducing the Riesz fractional derivative, we consider
superposition of the Fourier images in
Eq. \eqref{ch2_1-RF-6}, which is
\begin{multline}\label{ch2_1-RF-7}
\mathcal{F}\left[(D_{-}^{\mu}f)(x)\right](k)+
\mathcal{F}\left[(D_{+}^{\mu}f)(x)\right](k) \\
= \left[(i k)^{\mu}+(-i k)^{\mu}\right]\tilde{f}(k)
=2|k|^{\mu}\cos\frac{\pi\mu}{2}\tilde{f}(k)\, .
\end{multline}
Applying this result to the periodic function $f(x)=e^{i\bar{l}x}$,
we obtain $\tilde{f}(k)=2\pi\delta(k-\bar{l})$ that ensures 
the fractional operator keeps the periodicity of $g(x)$, which is 
immediate by the Fourier inversion.

By performing the Fourier inversion of Eq. \eqref{ch2_1-RF-7}, the Riesz fractional derivative $(-\Delta)^{\frac{\mu}{2}}$ can be defined. In the one dimensional case it is
\begin{multline}\label{ch2_1-RF-8}
(-\partial^2_x)^{\frac{\mu}{2}}f(x)\equiv\partial^{\mu}_{|x|}f(x)=
\frac{1}{2\cos\frac{\pi\mu}{2}}\left[D_{+}^{\mu}+D_{-}^{\mu}\right] \\
=\int_{-\infty}^{\infty}dyf(y)\frac{1}{2\pi}\int_{-\infty}^{\infty}|k|^{\mu}
e^{-i k(x-y)}dk\, ,
\end{multline}
where $\mathcal{F}^{-1}[|k|^{\mu}](x-y)$ is the kernel.
Let us obtain it for $\mu\in(0\, , 2)$. Thus we have
\begin{multline}\label{ch2_1-RF-9}
\frac{1}{2\pi}\int_{-\infty}^{\infty}|k|^{\mu}e^{-i k(x-y)}dk=
\frac{1}{\pi}\int_0^{\infty}k^{\mu}\cos[k(x-y)]dk \\
=\frac{1}{\pi}\left(-\frac{d^2}{dx^2}\right)
\int_0^{\infty}k^{\mu-2}\cos[k(x-y)]dk \\
=
\left(-\frac{d^2}{dx^2}\right)
\frac{(x-y)^{1-\mu}}{2\Gamma(2-\mu)\cos\left[\frac{(2-\mu)\pi}{2}\right]}
=\frac{(x-y)^{-\mu-1}}{2\Gamma(-\mu)\cos\frac{\mu\pi}{2}} \, .
\end{multline}
Eventually, we obtain
\begin{multline}\label{ch2_1-RF-10}
\partial^{\mu}_{|x|}f(x)=
\frac{1}{2\Gamma(-\mu)\cos\frac{\mu\pi}{2}}\int_{-\infty}^{\infty}|x-y|^{-\mu-1}f(y)dy \\
=
\frac{1}{2\Gamma(-\mu)\cos\frac{\mu\pi}{2}}\int_{-\infty}^{\infty}
f(x-y)|y|^{-\mu-1}\, ,
\end{multline}
which for $\mu>0$ diverges and demands regularisation.
As shown here, one of the regularization for $\mu\in(0\, , 2)$
is 
\begin{equation}\label{ch2_1-RF-11}
\partial^{\mu}_{|x|}f(x)=
\frac{1}{2\Gamma(-\mu)\cos\frac{\mu\pi}{2}}\int_{0+}^{\infty}
y^{-\mu-1}\left[f(x+y)-2f(x)+f(x-y)\right]dy\, .
\end{equation}
Taking into account that $\Gamma(1-\mu)=-\mu\Gamma(-\mu)$, Eq.
\eqref{ch2_1-RF-11} can be rewritten in the form of the 
Gr{\"u}nwald-Letnikov-Riesz fractional derivative \cite{SaKiMa93}
\begin{equation}\label{ch2_1-RF-12}
f^{\mu)}(x)=
\frac{\mu}{2\Gamma(1-\mu)\cos\frac{\mu\pi}{2}}\int_{0+}^{\infty}
\frac{2f(x)-f(x+y)-f(x-y)}{y^{\mu+1}}dy\, .
\end{equation}

\subsection{Gr{\"u}nwald-Letnikov derivative}\label{ch2_1-GL}

The Gr{\"u}nwald-Letnikov fractional derivative relates to the differences of a
fractional order $\Delta_h^{\mu}f(x)$ of a function $f(x)$ defined on the $x$ axis: $x\in R$. It can be presented as a finite difference with a step $h$ of order $\mu$ of the function $f(x)$
\begin{equation}\label{ch2_1-GL-1}
\Delta_h^{\mu}f(x)=\sum_{k=0}^{\infty}(-1)^k\binom{\mu}{k}f(x-kh)\, ,
\quad \mu>0\, .
\end{equation}
Then the function 
\begin{equation}\label{ch2_1-GL-2}
f^{\mu)}_{\pm}(x)=\lim_{h\to +0}\frac{(\Delta_{\pm h}f)(x)}{h^{\mu}}\, ,
\quad \mu>0
\end{equation}
is referred as the Gr\"unwald-Letnikov fractional derivative 
\cite{SaKiMa93}. It has been rigorously proven that
the expression \eqref{ch2_1-GL-2} coincides with the Marchaud derivative
with the same domain of definition \cite{SaKiMa93}(\S ~20.1), and it reads
\begin{equation}\label{ch2_1-GL-3}
f^{\mu)}_{\pm}(x) = D_{\pm}^{\mu}f(x)=
\lim_{\epsilon\to 0}\frac{\mu}{\Gamma(1-\mu)}\int_{\epsilon}^{\infty}
\frac{f(x)-f(x\mp y)}{y^{\mu+1}}dy \, , \quad 0<\mu<1\, ,
\end{equation}
and its symmetric form is
\begin{multline}\label{ch2_1-GL-4}
f^{\mu)}(x)=\frac{1}{2\cos\frac{\mu\pi}{2}}\lim_{h\to 0}
\frac{(\Delta_{+ h}f)(x)+ \Delta_{-h}f)(x)}{h^{\mu}} \\
=\frac{\mu}{2\Gamma(1-\mu)\cos\frac{\mu\pi}{2}}\int_{0+}^{\infty}
\frac{2f(x)-f(x+y)-f(x-y)}{y^{\mu+1}}dy\, ,
\end{multline}
which coincides with Eq. \eqref{ch2_1-RF-12}.
As it follows from Eqs. \eqref{ch2_1-RF-2} and \eqref{ch2_1-RF-3},
the fractional operator \eqref{ch2_1-GL-4} keeps $g(x)$ being periodic.

It is worth noting that the Marchaud derivatives $f^{\mu)}_{\pm}(x)$ in
Eq. \eqref{ch2_1-GL-3} are in the regularization form for $0<\mu<1$, while their symmetric combination in Eq. \eqref{ch2_1-GL-4} is regularized for $0<\mu<2$. Both the form do not destroy the periodicity of the exponential
$e^{i\bar{l}x}$.

\end{appendix}

%\section*{References}

\bibliography{data1}

\end{document}